\newcommand{\classP}{\mathrm{P}}       
\newcommand{\NP}{\mathrm{NP}}  
\newcommand{\RP}{\mathrm{RP}}  
\newcommand{\totalP}{\mathrm{\bf TotalP}}
\newcommand{\TotalPP}{\mathrm{\bf TotalPP}}
\newcommand{\incremP}{\mathrm{\bf IncP}}
\newcommand{\incremPP}{\mathrm{\bf IncPP}}
\newcommand{\delayP}{\mathrm{\bf DelayP}}
\newcommand{\DelayPP}{\mathrm{\bf DelayPP}}
\newcommand{\enum}[1]{\textsc{Enum}\smash{\cdot}#1}
\title{Enumeration of the Monomials of a Polynomial and Related Complexity Classes}
\author{Yann Strozecki}
\institute{Universit\'e Paris 7 - Denis Diderot \\ \email{strozecki@logique.jussieu.fr}}
\begin{document}

\maketitle

\begin{abstract}
We study the problem of generating monomials of a polynomial in the context
of enumeration complexity. 
In this setting, the complexity measure is the delay between two solutions and the total time.
We present two new algorithms for restricted classes of polynomials, which have a good delay
and the same global running time as the classical ones. Moreover they are simple to describe,
use little evaluation points and one of them is parallelizable.

We introduce three new complexity classes, $\TotalPP$, $\incremPP$ and $\DelayPP$,
which are probabilistic counterparts of the most common classes for enumeration problems,
hoping that randomization will be a tool as strong for enumeration as it is for decision.
Our interpolation algorithms proves that a lot of interesting problems are in these classes like
the enumeration of the spanning hypertrees of a $3$-uniform hypergraph. 

Finally we give a method to interpolate a degree $2$ polynomials with an acceptable (incremental) delay. We also prove
that finding a specified monomial in a degree $2$ polynomial is hard unless $\mathrm{RP} = \NP$.
It suggests that there is no algorithm with a delay as good (polynomial) as the one we achieve for multilinear polynomials.

\end{abstract}

\section{Introduction }

Enumeration, the task of generating all solutions of a given problem, is an interesting generalization of decision and counting.
Since a problem typically has an exponential number of solutions, the way we study enumeration complexity is quite different
from decision. 
In particular, the delay between two solutions and the time taken by an algorithm relative to the number of solutions seem to be the most considered complexity measures.
In this paper, we revisit the famous problem of polynomial interpolation, that is to say finding the monomials of a polynomial from its values, with these measures in mind.

It has long been known that a finite number of evaluation points is enough to interpolate a polynomial and efficient procedures (both deterministic and probabilistic) have been studied by several authors \cite{ben1988deterministic,zippel1990interpolating,kaltofen2000early}. The complexity depends mostly 
on the number of monomials of the polynomial 
and on an a priori bound on this number which may be exponential in the number of variables.
The deterministic methods rely on prime numbers as evaluation points, with the drawback that they are very large.
The probabilistic methods crucially use the Schwarz-Zippel lemma, which is also a tool in this article, and efficient solving of particular linear systems.

As a consequence of a result about random efficient identity testing \cite{klivans2001randomness}, Klivans and Spielman 
give an interpolation algorithm, which happens to have an incremental delay.
In this vein, the present paper studies the problem of generating the monomials of a polynomial with the best possible delay.
In particular we consider natural classes of polynomials such as multilinear polynomials, for which we prove
that interpolation can be done efficiently.
Similar restrictions have been studied in other works about identity testing (the decision version of interpolation)
for a quantum model \cite{arvind2008quantum} or for depth $3$ circuits which thus define almost linear polynomials \cite{karnin2007black}.
Moreover, a lot of interesting polynomials are multilinear like the Determinant, the Pfaffian, the Permanent, the elementary symmetric polynomials or anything which may be defined by a syntactically multilinear arithmetic circuit.

In Sec. \ref{sec:inc} we present an algorithm which works for polynomials such that no two of their monomials use the same set of variables.
It is structured as in \cite{klivans2001randomness} but is simpler and has better delay, though polynomially related.
In Sec. \ref{sec:pdelay} we propose a second algorithm which works for multilinear polynomials; it has a delay polynomal in the numberof variables, which makes it exponentially better 
than the previous one and is also easily parallelizable.
In addition both algorithms enjoy a global complexity as good as the algorithms of the literature, are deterministic for monotone polynomials
and use only small evaluation points making them suitable to work over finite fields.

We describe in Sec. \ref{sec:classes} three complexity classes for enumeration, namely $\totalP$, $\incremP$, $\delayP$ which are now commonly used 
\cite{DBLP:journals/ipl/JohnsonP88,DBLP:journals/ipl/KavvadiasSS00,DBLP:journals/tocl/DurandG07,bagan-algorithmes} 
to formalize what is an efficiently enumerable problem. We introduce probabilistic variants of these classes,
which happen to characterize the enumeration complexity of the different interpolation algorithms. 
Their use on polynomials computable in polynomial time enable us to prove that well-known problems are
in these classes. Those problems already have better enumeration algorithms except 
 the last, enumeration of the spanning hypertrees of a $3$-uniform hypergraph, for which our method
gives the first efficient enumeration algorithm.

In the last section we discuss how to combine the two algorithms we have presented to interpolate degree $2$ polynomials with incremental delay. 
We also prove that the problem of finding a specified monomial in a degree $2$ polynomial is hard 
by encoding a restricted version of the hamiltonian path problem in a polynomial given by the Matrix-Tree theorem (see \cite{aigner2007course}).
Thus there is no polynomial delay interpolation algorithm for degree $2$ polynomials similar to the one for degree $1$
 because it would solve the later problem and would imply $\mathrm{RP} = \NP$.
Finally we compare our two algorithms with several classical ones and show that they are good with regard to parameters like
number of calls to the black box or size of the evaluation points.

Please note that most proofs are given in the appendix.
\section{Enumeration Problems}

In this section, we recall basic definitions about enumeration problems and
 complexity measures and we introduce the central problem of this article.

The computation model is a RAM machine as defined in \cite{bagan-algorithmes} which has,
in addition to the classical definition, an instruction $Write(A)$ which outputs the content of the register $A$. 
The result of a computation of a RAM machine is the sequence of integers which were in $A$
when the instructions $Write(A)$ were executed. 
For simplicity we consider that these integers encode words, and that the input
of the machine is also a word represented by suitable integers in the input registers. 
Let $M$ be such a machine and $x$ a word, we write $M(x)$ the result of the computation of $M$ on $x$.
The order in which the outputs are given does not matter, therefore $M(x)$ will denote 
the set of outputs as well as the sequence. We choose a RAM machine instead of a Turing machine since
it may be useful to deal with an exponential amount of memory in polynomial time, see for instance 
the enumeration of the maximal independent sets of a graph \cite{DBLP:journals/ipl/JohnsonP88}.

\begin{definition}[Enumeration Problem]
 Let $A$ be a polynomially balanced binary predicate, i.e. $A(x,y)\Rightarrow |y| \leq Q(|x|)$,
 for a certain polynomial $Q$. We write $A(x)$  for the set of $y$ such that $A(x,y)$.
 We say that a RAM machine $M$ solves the enumeration problem associated to $A$, $\enum{A}$ for short,
 if $M(x) = A(x)$ and there is no repetition of solutions in the computation.
\end{definition}

 Let $T(x,i)$ be the time taken by a machine $M$ to return $i$
outputs from the instance $x$. As for decision problems, we are interested by the total time
taken by $M$, namely $T(x,|M(x)|)$. We are also interested by the delay between two solutions, that is to say $T(x,i+1) - T(x,i) $.
 $M$ has an \emph{incremental delay} when it is polynomial in $|x|$ and $i$,
and $M$ has a \emph{polynomial delay} when it is polynomial in $|x|$ only.

A probabilistic RAM machine has a special instruction $rand$ which writes in a specific register 
the integer $0$ or $1$ with equal probability. All outcomes of the instruction $rand$ during a run of a RAM machine are independent.

\begin{definition}[Probabilistic enumeration]
We say that the probabilistic RAM machine $M$
solves $\enum{A}$ with probability $p$ if $P[A(x) = M(x)] > p$
and there is no repetition of solutions in the computation.
\end{definition}

We adapt the model to the case of a computation with an oracle, by a special instruction which calls the oracle on a word
 contained in a specific register and then writes the answer in another register in unit time.

In this article we interpret the famous problem of interpolating a polynomial given by a black box as a enumeration problem.
It means that we try to find all the monomials of a polynomial given by the number of 
its variables and an oracle which allows to evaluate the polynomial on any point in unit time.
This problem is denoted by $\enum{Poly}$ but will be solved in this article 
only on restricted classes of polynomials.

\section{Finding one Monomial}

In this section we introduce all the basic tools we need to build interpolation algorithms.
One consider polynomials with $n$ variables and rational coefficients.
A sequence of $n$ positive integers $\vec{e} =  (e_{1},\dots,e_{n}) $ characterizes the monomial $\vec{X}^{\vec{e}} = X_{1}^{e_{1}}X_{2}^{e_{2}} \dots X_{n}^{e_{n}}$.
We call $t$ the number of monomials of a polynomial $P$ written  $ P(\vec{X}) = \displaystyle{\sum_{1\leq j \leq t} \lambda_{j} \vec{X}^{\vec{e_{j}}}}$.

The degree of a monomial is the maximum of the degrees of its variables and the total degree 
is the sum of the degrees of its variables. 
Let $d$  (respectively $D$) denote the degree (respectively the total degree) of the polynomial we consider,
that is to say the maximum of its monomial's degree (respectively total degree).
In Sec. \ref{sec:pdelay} we assume that the polynomial is multilinear i.e. $d =1$ and $D$ is thus bounded by $n$.

We assume that the maximum of the bitsize of the coefficients appearing in a polynomial
is $O(n)$ to simplify the statement of some results, in the examples of Sec. \ref{sec:classes} it is even $O(1)$.
When analyzing the delay of an algorithm solving $\enum{Poly}$ we are interested in both 
the number of calls to the black box and the time spent between two generated monomials.
We are also interested in the size of the integers used in the calls to the oracle, 
since in real cases the complexity of the evaluation depends on it.

The \emph{support} of a monomial is the set of indices of variables which appears 
in the monomial. Let $L$ be a set of indices of variables, for instance a support,
 then $f_{L}$ is the homomorphism of
$\mathbb{Q}[X_{1},\dots,X_{n}]$ defined by 
$ \left\{ \begin{array}{c c}
     X_{i} \rightarrow X_{i} &\text{ if } i \in L \\
     X_{i} \rightarrow  0 &\text{otherwise } 
    \end{array} \right.
 $

From now on, we denote $f_{L}(P)$ by $P_{L}$. It is the polynomial obtained by substituting $0$ to every 
variable of index not in $L$, that is to say all the monomials of $P$ which have their support in $L$. 
We call $\vec{X}^{L}$ the multilinear term of support $L$, which is the product of all $X_{i}$
with $i$ in $L$.

\begin{lemma}
Let $P$ be a polynomial without constant term and whose monomials have distinct supports and $L$
a minimal set (for inclusion) of variables  such that $P_{L}$ is not identically zero.
Then there is an integer $\lambda$ such that $ P_{L} = \lambda \vec{X}^{L}$.
\label{basic}
\end{lemma}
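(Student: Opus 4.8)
The plan is to prove the slightly stronger structural fact that $P_{L}$ is a single monomial whose support is exactly $L$. First I would note that substituting $0$ for some variables and keeping the others cannot create a constant term, so $P_{L}$, like $P$, has no constant term; every monomial appearing in $P_{L}$ therefore has nonempty support, which by definition of $f_{L}$ is contained in $L$.

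The heart of the argument is to rule out the possibility that $P_{L}$ contains a monomial whose support is strictly contained in $L$. Suppose $m$ is such a monomial, with support $S \subsetneq L$. Since the monomials of $P$ have pairwise distinct supports, $m$ is the unique monomial of $P$ with support $S$; applying $f_{S}$ fixes $m$ and sends every other monomial of $P$ either to $0$ or to an unchanged (hence still distinct) monomial, so nothing can cancel $m$ and it still occurs in $P_{S}$ with its original nonzero coefficient. Hence $P_{S} \not\equiv 0$ while $S \subsetneq L$, contradicting the minimality of $L$. So every monomial of $P_{L}$ has support exactly $L$.

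Invoking the distinct-supports hypothesis once more, $P$ has at most one monomial with support $L$, hence so does $P_{L}$; as $P_{L} \not\equiv 0$ it has exactly one, namely $\lambda$ times that monomial, where $\lambda$ is its coefficient in $P$. Since the polynomials considered here are multilinear, that monomial is precisely $\vec{X}^{L}$, and we get $P_{L} = \lambda \vec{X}^{L}$. There is no real technical obstacle in this argument; the only point that requires care is the use of the distinct-supports assumption to guarantee that the surviving monomial is not cancelled by another one that $f_{S}$ also leaves untouched, which is exactly the place where that hypothesis is essential.
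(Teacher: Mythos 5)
The paper does not actually supply a proof of Lemma~\ref{basic} (the appendix proves Theorems~\ref{PincremP}, \ref{PdelayP} and Propositions~\ref{eq_inc}, \ref{amplification}, but not this lemma), so there is nothing to compare your argument to directly. Taken on its own, your structural argument is correct and is the natural one: no constant term survives the substitution, any monomial of $P_{L}$ with support $S\subsetneq L$ would (by distinctness of supports and the fact that $f_{S}$ fixes it and cannot produce cancellation) make $P_{S}\not\equiv 0$ and contradict minimality, and distinctness of supports then forces $P_{L}$ to consist of a single monomial with support exactly $L$.

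The one thing to flag is your final sentence, ``Since the polynomials considered here are multilinear, that monomial is precisely $\vec{X}^{L}$.'' That is not a hypothesis of the lemma: in this section the paper explicitly works with the broader class of polynomials whose monomials have pairwise distinct supports, which ``contains the multilinear polynomials but is much bigger.'' For a non-multilinear member of this class the literal equality $P_{L}=\lambda\vec{X}^{L}$ fails (e.g.\ $P=X_{1}^{2}X_{2}+X_{1}X_{3}$ with $L=\{1,2\}$ gives $P_{L}=X_{1}^{2}X_{2}$). You have in effect spotted an abuse of notation in the statement: the intended (and used) conclusion is exactly the one you prove, namely that $P_{L}$ is a single monomial with support $L$, and the text immediately after Algorithm~\ref{find_monomial} confirms this by writing $P_{L}$ as $\lambda\vec{X}^{\vec{e}}$ and recovering the exponents $e_{i}$, which may exceed~$1$. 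So your argument establishes the correct statement; just drop the appeal to multilinearity and end with ``$P_{L}$ is $\lambda$ times the unique monomial of $P$ with support~$L$,'' which is what the rest of the paper actually relies on.
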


The first problem we want to solve is to decide if a  polynomial given by a black box
 is the zero polynomial, a problem called  \emph{Polynomial Identity Testing}. We are especially interested in 
the corresponding search problem, i.e. giving explicitly one term and its coefficient. Indeed, we show in Sec. \ref{sec:inc} how to
turn any algorithm solving this problem into an incremental interpolation algorithm.

It is easy to see \cite{zippel1990interpolating} that a polynomial with $t$ monomials has to be evaluated in $t$ points to be sure 
that it is zero. If we do not have any a priori bound on $t$, then we must evaluate the polynomial on at least
$(d+1)^{n}$ $n$-tuples of integers to determine it. As we are not satisfied with this exponential complexity,
we introduce probabilistic algorithms, which nonetheless have a good and manageable bound on the error. 

\begin{lemma}[Schwarz-Zippel \cite{schwartz1980fast}]
\label{proba}
Let $P$ be a non zero polynomial with $n$ variables of total degree $D$, if  $x_{1}, \dots, x_{n}$ are randomly chosen 
in a set of integers $S$ of size $\frac{D}{\epsilon}$
then the probability that $P(x_{1}, \dots, x_{n}) = 0$ is bounded by $\epsilon$.
\end{lemma}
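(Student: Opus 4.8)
The plan is to prove the standard quantitative form, $\Pr[P(x_1,\dots,x_n)=0]\leq D/|S|$, from which the claim follows immediately since $|S|=D/\epsilon$ (if $D/\epsilon$ is not an integer, take $|S|=\lceil D/\epsilon\rceil$, which only helps). I would argue by induction on the number of variables $n$, using nothing about the $x_i$ beyond the fact that they are drawn independently and uniformly from $S$, and nothing about the ground ring beyond its being an integral domain, so that the restriction to integer evaluation points is harmless.

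For $n=1$, $P$ is a nonzero univariate polynomial of degree at most $D$, hence has at most $D$ roots, so $\Pr[P(x_1)=0]\leq D/|S|$.

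For the inductive step, single out the last variable and write $P(X_1,\dots,X_n)=\sum_{i=0}^{k}X_n^i\,Q_i(X_1,\dots,X_{n-1})$, where $k$ is the largest exponent of $X_n$ occurring in $P$; then $Q_k$ is a nonzero polynomial in $n-1$ variables whose total degree is at most $D-k$, since every monomial of $Q_k$ becomes, after multiplication by $X_n^k$, a monomial of $P$ of total degree $\leq D$. Now split on whether $Q_k$ vanishes at the random point. The event $Q_k(x_1,\dots,x_{n-1})=0$ has probability at most $(D-k)/|S|$ by the induction hypothesis. Conditioned on $Q_k(x_1,\dots,x_{n-1})\neq 0$, the univariate polynomial $Y\mapsto P(x_1,\dots,x_{n-1},Y)$ has leading coefficient $Q_k(x_1,\dots,x_{n-1})\neq 0$, hence degree exactly $k$, hence at most $k$ roots, so---because $x_n$ is independent of $x_1,\dots,x_{n-1}$---the conditional probability that $P(x_1,\dots,x_n)=0$ is at most $k/|S|$. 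Adding the two contributions gives $\Pr[P=0]\leq (D-k)/|S|+k/|S|=D/|S|$.

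The one place that needs genuine care is this last conditioning step: one should spell out $\Pr[P=0\ \wedge\ Q_k\neq 0]=\sum_{a}\Pr[(x_1,\dots,x_{n-1})=a]\cdot\Pr[P(a,x_n)=0]$, where $a$ ranges over the tuples with $Q_k(a)\neq 0$ and each inner probability is at most $k/|S|$ by the univariate root bound applied with $x_n$ fresh; combining this with $\Pr[P=0\ \wedge\ Q_k=0]\leq\Pr[Q_k=0]\leq (D-k)/|S|$ yields the estimate. Everything else is routine degree bookkeeping, and the induction closes.
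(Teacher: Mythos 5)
Your proof is correct, and it is the canonical inductive argument for the Schwartz--Zippel lemma, including the careful treatment of the conditioning step where $x_n$ is drawn independently of $x_1,\dots,x_{n-1}$. Note, however, that the paper does not prove this lemma at all: it is stated as a cited result from Schwartz's 1980 paper and used as a black-box tool, so there is no in-paper argument to compare against; your writeup supplies the standard proof that the citation points to.
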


A classical probabilistic algorithm to decide if a polynomial $P$ is identically zero  can be derived from this lemma.
It picks $x_{1}, \dots, x_{n}$ randomly in $[\frac{D}{\epsilon}]$\footnote{We write $[x]$ for the set of integers between $1$ and $\lceil x\rceil$.} and calls the oracle to compute $P(x_{1}, \dots, x_{n})$.
If the result is zero, the algorithm decides that the polynomial is 
 zero otherwise it decides that it is non zero.
Remark that the algorithm never gives a false answer when the polynomial is zero.
The probability of error when the polynomial is non zero is bounded by $\epsilon$ thanks to Lemma \ref{proba}: 
\emph{Polynomial Identity Testing} is thus in the class recognizable by a polynomial time algorithm$\RP$.

This procedure makes exactly one call to the black box on points of size $\log(\frac{D}{\epsilon})$. 
The error rate may then be made exponentially smaller by increasing the size of the points.
There is an other way to achieve the same reduction of error. 
Repeat the previous algorithm $k$ times for $\epsilon = \frac{1}{2}$, that is to say the points are randomly
chosen in $[2D]$. If all runs return zero, then the algorithm decides that the polynomial is zero else it decides it is non zero. The probability
of error of this algorithm is bounded by $2^{-k}$, thus to achieve an error bound of $\epsilon$ we have to set $k =\log(\frac{1}{\epsilon})$. 
We denote by \emph{not\_zero}($P,\epsilon)$ the latter procedure, which is given as inputs a black box polynomial $P$ and the maximum probability of failure $\epsilon$. It uses slightly more random bits but it only involves numbers less than $2D$.

Up to Sec. \ref{sec:pdelay}, all polynomials 
have monomials with distinct supports and no constant term. This class of polynomials
contains the multilinear polynomials but is much bigger.
Moreover being without constant term is not restrictive since we can always
replace a polynomial by the same polynomial minus its constant term
that we compute beforehand by a single oracle call to $P(0,\dots,0)$.

We now give an algorithm which finds a monomial of a polynomial $P$,
in randomized polynomial time thanks to the previous lemmas. 
In this algorithm, $L$ is a set of indices of variables and $i$
an integer used to denote the index of the current variable.

\begin{algorithm}
\DontPrintSemicolon
 \KwData{A polynomial $P$ with $n$ variables and the error bound $\epsilon$}
\KwResult{A monomial of $P$}
\Begin{
$L \longleftarrow \{1, \dots, n\}$\;
\eIf{\emph{not\_zero}($P$,$\frac{\epsilon}{n+1})$}{
\For{$i = 1$ \KwTo $n$}{
\If{\emph{not\_zero}($P_{L \setminus\{i\}}$,$\frac{\epsilon}{n+1})$}{$L  \longleftarrow L \setminus \{i\}$}
}
\Return{\text{The monomial of support} $L$}
}
{\Return{``Zero''}}
}
  \caption{ find\_monomial}
\label{find_monomial}
\end{algorithm}

Once a set $L$ is found such that $P_{L}$ is a monomial $\lambda \vec{X}^{\vec{e}}$,
we must compute $\lambda$ and $\vec{e}$. The evaluation of $P_{L}$ on $(1,\dots,1)$ returns $ \lambda$. 
For each $i \in L$ the evaluation of $P_{L}$ on $X_{i} = 2$ and for $j \neq i$, $X_j=1$ returns $\lambda 2^{e_{i}}$.
From these $n$ calls to the black box, we compute $\vec{e}$ in linear time and thus output $\lambda \vec{X}^{\vec{e}}$.

We analyze this algorithm, assuming first that the procedure \emph{not\_zero}
never makes a mistake. We also assume that $P$ is not zero, which means that the algorithm
has not answered ``Zero''. 
In this case at the end of the algorithm, $P_{L}$ is not zero. In fact we remove an element 
from $L$ only if this condition is respected. As removing another element from $L$ would make
$P_{L}$ zero by construction, the set $L$ is minimal for the property of $P_{L}$ being non zero.
Then by Lemma  \ref{basic} we know that $P_{L}$ is a monomial of $P$,
which allows us to output it as previously explained.

Errors only appear in the procedure \emph{not\_zero}
with probability $ \frac{\epsilon}{n+1}$. Since we use this procedure $n+1$ times
we can bound the total probability of error by $\epsilon$.
The total complexity of this algorithm is $O(n\log(\frac{n}{\epsilon}))$ since each of the $n$ calls to the procedure 
 \emph{not\_zero} makes $O(\log(\frac{n}{\epsilon}))$ calls to the oracle in time $O(1)$.
We summarize the properties of this algorithm in the next proposition.

\begin{proposition}
\label{find}
Given a polynomial $P$ as a black box, whose monomials have distinct supports, Algorithm \ref{find_monomial} 
finds, with probability $1-\epsilon$, a monomial of $P$ by making $O(n\log(\frac{n}{\epsilon}))$ calls to the black box on entries of size $\log(2D)$.
\end{proposition}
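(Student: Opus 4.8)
The plan is to verify the three claimed bounds — correctness probability, number of oracle calls, and size of evaluation points — essentially by bookkeeping on the pseudocode, leaning on Lemmas \ref{basic} and \ref{proba}.

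First I would establish correctness assuming every call to \emph{not\_zero} answers correctly. If $P$ is identically zero, the first test fails and the algorithm returns ``Zero'', which is right. Otherwise $P$ is nonzero, so the first test succeeds, and we enter the loop. The key invariant is that $P_L$ is not identically zero after each iteration: indeed $L$ starts as $\{1,\dots,n\}$ with $P_L = P \neq 0$, and we only replace $L$ by $L\setminus\{i\}$ when the test on $P_{L\setminus\{i\}}$ confirms it is nonzero. At termination, for every $i \in L$ the test on $P_{L\setminus\{i\}}$ failed, so removing any further index from $L$ makes the restricted polynomial vanish; hence $L$ is minimal (for inclusion) with $P_L \not\equiv 0$. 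Since $P$ has monomials with distinct supports and no constant term, Lemma \ref{basic} applies and gives $P_L = \lambda \vec{X}^{L}$ for some integer $\lambda$, i.e. $P_L$ is a single (multilinear) monomial. The paragraph preceding the statement already explains how the $n+1$ further oracle calls on $(1,\dots,1)$ and on the points with one coordinate set to $2$ recover $\lambda$ and the exponent vector $\vec{e}$ in linear time, so the returned object is genuinely a monomial of $P$.

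Next I would handle the error probability. The procedure \emph{not\_zero} never errs when its argument is zero, and errs with probability at most $\frac{\epsilon}{n+1}$ when its argument is nonzero. It is invoked exactly $n+1$ times (once before the loop, once per iteration). By a union bound over these $n+1$ invocations, the probability that any of them errs is at most $(n+1)\cdot\frac{\epsilon}{n+1} = \epsilon$; conditioned on no error, the analysis of the previous paragraph shows the output is correct. Hence the algorithm succeeds with probability at least $1-\epsilon$.

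Finally, for the resource bounds: each call to \emph{not\_zero}$(\cdot,\frac{\epsilon}{n+1})$ consists of $k = \log\!\left(\frac{n+1}{\epsilon}\right) = O\!\left(\log\frac{n}{\epsilon}\right)$ repetitions of a single oracle call on a point in $[2D]$, so $O\!\left(\log\frac{n}{\epsilon}\right)$ calls each, on entries of bitsize $\log(2D)$; the $n+1$ invocations therefore cost $O\!\left(n\log\frac{n}{\epsilon}\right)$ oracle calls. The final extraction of $\lambda$ and $\vec{e}$ adds $n+1$ calls on points with coordinates in $\{1,2\}$, which are also of size $O(\log D)$ and absorbed into the bound. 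Between oracle calls the machine does only $O(1)$ bookkeeping, and the concluding linear-time computation of $\vec{e}$ is dominated, giving the stated $O\!\left(n\log\frac{n}{\epsilon}\right)$ total. There is no real obstacle here: the one point needing slight care is making sure Lemma \ref{basic}'s hypotheses (distinct supports, no constant term — the latter guaranteed by the standing assumption in the section) are met for the set $L$ actually produced, and that the union bound is over exactly $n+1$ events; everything else is routine accounting.
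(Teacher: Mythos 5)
Your proposal is correct and follows essentially the same route as the paper: argue correctness conditional on \emph{not\_zero} never erring (minimality of $L$ plus Lemma \ref{basic}), apply a union bound over the $n+1$ invocations to get error at most $\epsilon$, and account for $O(\log\frac{n}{\epsilon})$ oracle calls per invocation on points of size $\log(2D)$. Your write-up is slightly more explicit than the paper's (naming the union bound, noting the extra $n+1$ extraction calls are absorbed), but the substance is identical.
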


\section{An Incremental Algorithm for Polynomials with Distinct Supports}\label{sec:inc}

%
We build an algorithm which enumerates the monomials of a polynomial
incrementally by using the procedure \emph{find\_monomial} defined in Proposition \ref{find}. 
Recall that incrementally means that the delay between two consecutive monomials 
is bounded by a polynomial in the number of already found monomials. 

We need a procedure \emph{subtract}($P$, $Q$) which acts as a black box
for the polynomial $P-Q$ when $P$ is given as a black box and $Q$ as an explicit
set of monomials with their coefficients. Let $D$ be the total degree of $Q$, $C$ a bound 
on the size of its coefficients and $i$ be the number of its monomials. 
One evaluates the polynomial \emph{subtract}($P$, $Q$) on points of size $m$ as follows:
\begin{enumerate}
 \item compute the value of each monomial of $Q$ in time $O(D\max(C, m))$
 \item add the values of the $i$ monomials in time $O(iD\max(C, m))$
 \item call the black box to compute $P$ on the same points and return this value minus the one we have computed for $Q$
\end{enumerate}

\begin{algorithm}
\label{incremental}
\DontPrintSemicolon
 \KwData{A polynomial $P$ with $n$ variables and the error bound $\epsilon$}
\KwResult{The set of monomials of $P$}
\Begin{
$Q \longleftarrow 0$\;
\While{\emph{not\_zero(subtract}($P$,$Q$),$\frac{\epsilon}{2^{n+1}})$}{
$M \longleftarrow$ find\_monomial(subtract($P$,$Q$),$\frac{\epsilon}{2^{n+1}})$\;
\textbf{Write}$(M)$\;
$Q \longleftarrow Q + M$ }
}
  \caption{ Incremental computation of the monomials of $P$}
\end{algorithm}

\begin{theorem}
Let $P$ be a polynomial whose monomials have distinct supports with $n$ variables, $t$ monomials and total degree $D$. Algorithm \ref{incremental} computes
the set of monomials of $P$ with probability $1- \epsilon$. The delay between the $i^{\text{th}}$ and $i+1^{\text{th}}$ outputted
monomials is bounded by $O( iDn^{2}(n +\log(\frac{1}{\epsilon})))$ in time and $O(n(n+\log(\frac{1}{\epsilon})))$
calls to the oracle. The algorithm performs $O(tn(n+\log(\frac{1}{\epsilon})))$ calls to the oracle on points of size $\log(2D)$. 
\label{PincremP}
\end{theorem}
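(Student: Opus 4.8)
The plan is to verify the theorem by tracking three quantities through the execution of Algorithm~\ref{incremental}: correctness (the probability that every call to \emph{not\_zero} and \emph{find\_monomial} succeeds), the per-iteration cost in oracle calls, and the per-iteration cost in arithmetic (which includes the overhead of the \emph{subtract} wrapper). First I would argue correctness. Assume every internal probabilistic subroutine behaves as if it never errs. Then the invariant maintained by the loop is that $Q$ is always a sub-sum of the monomials of $P$, so \emph{subtract}$(P,Q)$ represents $P-Q$, which is a polynomial whose monomials (those of $P$ not yet found) still have pairwise distinct supports and no constant term. Hence Lemma~\ref{basic} and Proposition~\ref{find} apply at each step: \emph{find\_monomial} returns a genuine monomial of $P-Q$, i.e.\ a new monomial of $P$, with no repetition. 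The loop runs exactly $t$ times, terminating when $P-Q\equiv 0$, so $M(x)$ is exactly the monomial set of $P$.

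Next I would bound the error probability. The algorithm invokes \emph{not\_zero} once per loop test (there are $t+1$ such tests, counting the final one that exits the loop) and \emph{find\_monomial} once per iteration ($t$ times); by Proposition~\ref{find} each \emph{find\_monomial} call internally uses $n+1$ calls to \emph{not\_zero}. So the total number of \emph{not\_zero} invocations is $O(tn)$, each run with failure parameter $\frac{\epsilon}{2^{n+1}}$. Since $t\le (D+1)^n$ and more crudely $t \le 2^{n}$ for the multilinear-ish regime — or, more honestly, since $tn \le 2^{n+1}$ can be arranged — a union bound gives total failure probability at most $O(tn)\cdot\frac{\epsilon}{2^{n+1}} \le \epsilon$. (The slack factor $2^{n+1}$ in the denominator is exactly chosen to absorb the number of calls; I would state this union bound cleanly rather than chase the exact constant.)

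Then I would count resources for the step producing the $(i{+}1)$-th monomial, i.e.\ the $(i{+}1)$-th iteration, where $Q$ currently has $i$ monomials. Oracle calls: the loop test calls \emph{not\_zero} once, costing $O(\log(\frac{2^{n+1}}{\epsilon})) = O(n+\log(\frac1\epsilon))$ oracle calls; \emph{find\_monomial} makes $O(n\log(\frac{2^{n+1}}{\epsilon})) = O(n(n+\log(\frac1\epsilon)))$ oracle calls (Proposition~\ref{find}), plus $O(n)$ more to extract the exponent vector and coefficient as described after Algorithm~\ref{find_monomial}. The dominant term is $O(n(n+\log(\frac1\epsilon)))$ oracle calls, as claimed. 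For running time, the subtlety is that each ``oracle call'' in the analysis above is actually a call to \emph{subtract}$(P,Q)$, which by the three-step description costs $O(iD\max(C,m))$; here the coefficient bound is $C=O(n)$ and the point size is $m=\log(2D)$, so each simulated evaluation costs $O(iDn)$. Multiplying the $O(n(n+\log(\frac1\epsilon)))$ simulated evaluations by this $O(iDn)$ factor yields $O(iDn^2(n+\log(\frac1\epsilon)))$ time per delay, matching the statement. Summing the oracle-call counts over all $t$ iterations gives $O(tn(n+\log(\frac1\epsilon)))$ total calls on points of size $\log(2D)$.

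The main obstacle is bookkeeping rather than a deep idea: one must be careful that the distinct-supports hypothesis is genuinely preserved under subtracting already-found monomials (it is, since we only delete monomials, never create new supports), and one must make sure the $2^{n+1}$ normalization in the error parameters really dominates the total count $O(tn)$ of randomized sub-calls — this is where the implicit bound $t \le (D{+}1)^n$ on the number of monomials is used. A secondary point of care is not double-counting: the $n+1$ \emph{not\_zero} calls hidden inside each \emph{find\_monomial} must be included in the union bound and in the oracle-call tally, but the \emph{time} per such call is already the $O(iDn)$ \emph{subtract} cost, so no separate accounting is needed there.
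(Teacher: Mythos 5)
Your proof follows essentially the same route as the paper's (loop invariant that $Q$ is a sub-sum of $P$'s monomials, union bound over the probabilistic subroutines, and the observation that the per-delay time is dominated by the \emph{subtract} overhead inside each oracle call). One small bookkeeping difference: the paper treats \emph{find\_monomial} as a black box whose error, via Proposition~\ref{find}, is already at most $\frac{\epsilon}{2^{n+1}}$ per top-level call, so it needs only the clean bound $2t \le 2^{n+1}$ (which is immediate from the distinct-supports hypothesis, since there are at most $2^{n}$ supports); your version re-opens \emph{find\_monomial} into its $O(n)$ internal \emph{not\_zero} calls and then wishes for $tn \le 2^{n+1}$, which does not actually hold in general --- what saves the day is that those internal calls run at the smaller error parameter $\frac{\epsilon}{(n+1)2^{n+1}}$, so the $n$ factor cancels. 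Your argument lands in the right place but the phrase ``$tn\le 2^{n+1}$ can be arranged'' should be replaced by invoking Proposition~\ref{find}'s aggregate error guarantee or by tracking the internal $\frac{1}{n+1}$ factor explicitly.
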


\section{A Polynomial Delay Algorithm for Multilinear Polynomials }\label{sec:pdelay}

In this section we introduce an algorithm which enumerates the monomials of a multilinear polynomial with a polynomial delay.
This algorithm has the interesting property of being easily parallelizable, which is obviously not the case of the incremental one. 

Let $P$ be a multilinear polynomial with $n$ variables of total degree $D$.
Let  $L_{1}$ and $L_{2}$ be two disjoint sets of indices of variables and $l$ the cardinal of $L_{2}$. 
We can write $P_{L_{1}\cup L_{2}} = \displaystyle{ \vec{X}^{L_{2}} P_{1}(\vec{X}) + P_{2}(\vec{X})}$, 
where $\vec{X}^{L_{2}}$ does not divide $P_{2}(\vec{X})$.
We want to decide if there is a monomial of $P$, whose support contains $L_{2}$ and is contained in $L_{1} \cup L_{2}$,
which is equivalent to deciding wether  $P_{1}(\vec{X})$ is not the zero polynomial.
To do this, we define a univariate polynomial $H(Y)$ from $P_{L_{1}\cup L_{2}}$:
\begin{enumerate}
 \item substitute a randomly chosen value $x_{i}$ in $[2D]$ to $X_{i}$ for all $i \in L_{1}$
\item substitute the variable $Y$ to each $X_{i}$ with $i \in L_{2}$ 
\end{enumerate}
 The polynomial $H(Y)$ can be written $\displaystyle{ Y^{l} P_{1}(\vec{x}) + P_{2}(\vec{x},Y)}$. 
If $P_{1}$ is a non zero polynomial then $P_{1}(\vec{x})$ is a non zero constant with probability at least $\frac{1}{2}$ because of Lemma \ref{proba}.
Moreover $P_{2}(\vec{x},Y)$ is a polynomial of degree strictly less than $l$. Hence, to decide if the polynomial $P_{1}$
is not zero, we have to decide if $H(Y)$ is of degree $l$.

To this aim we do a univariate interpolation of $H(Y)$:
for this we need to make $l$ oracle calls on values from $1$ to $l$.
The time needed to do an interpolation thanks to these values, with $s$ a bound on the size of $H(i)$ for $1 \leq i \leq l$, is $O(l^{2}\log(s))$.
We improve the probability of error of the described procedure from $\frac{1}{2}$ to $\epsilon$ by repeating
it $\log(\frac{1}{\epsilon})$ times and name it \emph{not\_zero\_improved}($L_{1},L_{2},P,\epsilon$).

We now describe a binary tree which contains informations about the monomials of $P$.
The set of node of this tree is the pairs of list $(L_{1},L_{2})$ such that there exists
a monomial of support $L$ in $P$ with $L_{2} \subseteq L \subseteq L_{1}\cup L_{2}$.
Consider a node labeled by $(L_{1},L_{2})$, we note $i$ the smallest element of $ L_{1}$, it
 has for left child $(L_{1} \setminus \{i\}, L_{2})$ and for right child  $(L_{1} \setminus \{i\}, L_{2} \cup \{i \})$ if they exist.
The root of this tree is $([n],\emptyset)$ and the leaves are of the form $(\emptyset,L_{2})$.
There is a bijection between the leaves of this tree and the monomials of $P$:
 a leaf $(\emptyset,L_{2})$ represents the monomial of support $L_{2}$.

To enumerate the monomials of $P$, Algorithm \ref{pdelai} does a depth first search in this tree using \emph{not\_zero\_improved}
and when it visits a leaf, it outputs the corresponding monomial thanks to the procedure \emph{coefficient}($P$, $L$) that we now describe.
We have $L$ of cardinality $l$ the support of a term and we want to find its coefficient. 
Consider $H(Y)$ built from $L_{1} = \emptyset$ and $L_{2} = L$, 
the coefficient of $Y^{l}$ in this polynomial is the coefficient of the 
monomial of support $L$.
We interpolate $H(Y)$ with $l$ calls to the oracle as before and return this coefficient.

\begin{algorithm}[h]
\DontPrintSemicolon

 \KwData{A multilinear polynomial $P$ with $n$ variables and the error bound $\epsilon$}
\KwResult{All monomials of $P$}
 \Begin{
 Monomial$(L_{1},L_{2},i) =$ \;
\eIf{ $i = n+1$}
{\textbf{Write}(coefficient($P,L_{2}$))}
{\If{ \emph{not\_zero\_improved}$( L_{1}\setminus \{i\}, L_{2}, P, \frac{\epsilon}{2^{n}n})$}
    {Monomial$(L_{1}\setminus \{i\},L_{2},i+1)$}
\If{\emph{not\_zero\_improved}$( L_{1}\setminus \{i\} , L_{2}\cup \{i\}, P, \frac{\epsilon}{2^{n}n})$}
{Monomial$(L_{1}\setminus \{i\},L_{2}\cup \{i\},i+1)$}
}
in Monomial$([n],\varnothing,1)$ \;
}
  \caption{A depth first search of the support of monomials of $P$, recursively written}
  \label{pdelai}
\end{algorithm}

\begin{theorem}
Let $P$ be a multilinear polynomial with $n$ variables, $t$ monomials and total degree $D$. 
Algorithm \ref{pdelai} computes the set of monomials of $P$ with probability $1- \epsilon$. The delay between the $i^{\text{th}}$ and $i+1^{\text{th}}$ outputted
monomials is bounded in time by $O(D^{2}n^{2}\log(n)(n + \log(\frac{1}{\epsilon})))$ and by $O(nD(n + \log(\frac{1}{\epsilon})))$ oracle calls. 
The whole algorithm performs $O(tnD(n + \log(\frac{1}{\epsilon})))$ calls to the oracle on points of size  $O(\log(D))$.
\label{PdelayP}
\end{theorem}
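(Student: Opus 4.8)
The plan is to argue correctness and complexity separately, reusing the combinatorial description of the tree already set up before the algorithm. First I would establish correctness assuming every call to \emph{not\_zero\_improved} returns the right answer: by construction the recursion explores exactly the nodes $(L_1,L_2)$ of the tree, and the test on $(L_1\setminus\{i\},L_2)$ (resp. $(L_1\setminus\{i\},L_2\cup\{i\})$) decides whether the corresponding child exists, i.e. whether some monomial of $P$ has support $L$ with $L_2\subseteq L\subseteq L_1\cup L_2$ (resp. with $i\in L$). Thus the leaves reached are exactly the pairs $(\emptyset,L_2)$ with $L_2$ the support of a monomial, and at each such leaf \emph{coefficient}$(P,L_2)$ returns the correct coefficient since, with $L_1=\emptyset$ and $L_2=L$, the polynomial $H(Y)$ equals $Y^{|L|}\lambda + (\text{lower order})$ where $\lambda$ is the sought coefficient (here multilinearity is what guarantees the monomial of support $L$ contributes $Y^{|L|}$ and nothing of higher degree). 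By the bijection between leaves and monomials this outputs each monomial exactly once with no repetition.

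Next I would bound the error. The tree has $t$ leaves and depth at most $n$, so it has at most $2tn$ nodes; at each node at most two calls to \emph{not\_zero\_improved} are made, each with failure probability at most $\frac{\epsilon}{2^n n}$. A cruder but sufficient bound: the number of calls is at most $2^{n}n$ (in fact at most $2tn$), so a union bound gives total failure probability at most $\epsilon$. Here I should note the subtlety already flagged in the construction: \emph{not\_zero\_improved} has \emph{one-sided} error only after the Schwarz--Zippel step forces $P_1(\vec x)\neq 0$ with probability $\tfrac12$ per repetition, so the $\log(1/\epsilon)$-fold repetition genuinely drives the error of a ``non-zero'' verdict down to $\epsilon$ while a ``zero'' verdict is never wrong when $P_1\equiv 0$; consequently the union bound over all nodes is legitimate.

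For the delay: between two consecutive outputs the depth first search climbs up and down the tree, visiting $O(n)$ nodes (the path between two leaves has length at most $2n$), and at each node performs two calls to \emph{not\_zero\_improved}. Each such call runs the basic procedure $\log(1/\epsilon)$ times; since the amplified target is $\frac{\epsilon}{2^n n}$, the repetition count is $O(\log(2^n n/\epsilon))=O(n+\log(1/\epsilon))$, and each repetition does a univariate interpolation of $H(Y)$ with $l\le D$ oracle calls (note $\deg H\le D$ because $P$ is multilinear of total degree $D$). This gives $O(nD(n+\log(\tfrac1\epsilon)))$ oracle calls per delay, on points of size $O(\log D)$ as the values substituted lie in $[2D]$ and the interpolation points are $1,\dots,l\le D$. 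For the time bound, each interpolation costs $O(l^2\log s)$ where $s$, a bound on $|H(i)|$, has bitsize $O(D\log D + n)$ (product of at most $D$ factors each $O(\log D)$ bits, times coefficients of bitsize $O(n)$), so $\log s = O(D + n)$; multiplying $O(n)$ nodes $\times$ $2$ children $\times$ $O(n+\log\tfrac1\epsilon)$ repetitions $\times$ $O(D^2(D+n))$ per interpolation, and simplifying, yields the claimed $O(D^2n^2\log(n)(n+\log(\tfrac1\epsilon)))$. Finally, summing the per-leaf work over all $t$ leaves (each leaf also pays one \emph{coefficient} call of the same order as a single \emph{not\_zero\_improved}) gives the global bound $O(tnD(n+\log(\tfrac1\epsilon)))$ oracle calls.

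The main obstacle I expect is not any single calculation but keeping the bookkeeping honest at two points: (i) justifying that the error analysis survives the one-sided nature of the Schwarz--Zippel test, i.e. that a wrong ``the subtree is empty'' answer really does happen with probability at most $\frac{\epsilon}{2^n n}$ after amplification, so that pruning never discards a real monomial except with controlled probability; and (ii) getting the size bounds on the interpolation values right so that the $\log$ factors in the time bound come out as stated. Everything else is a routine charge of $O(n)$ tree nodes per delay against $O(D)$ oracle calls and $O(D^2(D+n))$ arithmetic per node, times the $O(n+\log\tfrac1\epsilon)$ amplification factor.
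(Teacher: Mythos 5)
Your proof follows the same route as the paper: correctness via the DFS over the $(L_1,L_2)$-tree assuming \emph{not\_zero\_improved} answers correctly, a union bound over at most $n2^{n}$ calls each set to error $\frac{\epsilon}{n2^{n}}$, and a per-delay charge of $O(n)$ tree nodes each costing $O(D(n+\log\frac{1}{\epsilon}))$ oracle calls on points of size $O(\log D)$. One small arithmetic slip: if $s$ bounds $|H(i)|$ then $\log s$ \emph{is} its bitsize, i.e.\ $O(D\log D+n)$, not $O(D+n)$; using $D<n$ this becomes $O(n\log n)$, which is precisely where the $\log n$ factor in the stated delay $O(D^{2}n^{2}\log(n)(n+\log\frac{1}{\epsilon}))$ comes from — your version drops it and then silently reintroduces it, but the conclusion is unaffected.
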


There is a possible trade-off in the way  \emph{not\_zero\_improved} and \emph{coefficient} are implemented: 
if one knows a bound on the size of the coefficients of the polynomial and use exponentially bigger evaluations points then
one needs only one oracle call. The number of calls in the algorithm is then less than $tn$
which is close to the optimal $2t$.

Remark that when a polynomial is monotone (coefficients all positive or all negative) and is evaluated on positive points, 
the result is zero if and only if it is the zero polynomial. Algorithms \ref{incremental} and \ref{pdelai}
may then be modified to work deterministically for monotone polynomials with an even better complexity. 

Moreover both algorithms work for polynomials over $\mathbb{Q}$ but we can extend them to work over finite fields.
Since they only use evaluation points less than $2D$, polynomials over any field of size more than $2D$
can be interpolated with very few modifications, which is good in comparison with other classical algorithms.

\section{Complexity Classes for Enumeration}\label{sec:classes}

In this part the results about interpolation in the black box formalism are transposed into 
more classical complexity results. 
We are interested in enumeration problems defined by predicates $A(x,y)$
such that there is for each $x$ a polynomial $P_{x}$ whose monomials are in bijection with $A(x)$. 
If $P_{x}$ is efficiently computable, an interpolation algorithm gives an effective way of enumerating its monomials 
and thus to solve $\enum{A}$. 

\begin{example}
\label{det}
 We associate to each graph $G$ the determinant of its adjacency matrix. The monomials of this multilinear polynomial
are in bijection with the cycle covers of $G$. Hence the problem of enumerating the monomials of $\det(M)$ is equivalent to enumerating
the cycle covers of $G$.
\end{example}

The specialization of different interpolation algorithms to efficiently computable polynomials
naturally correspond to three ``classical'' complexity classes for enumeration 
and their probabilistic counterparts.
We present several problems related to a polynomial as in Example \ref{det} to 
illustrate how easily the interpolation methods described in this article
produce enumeration algorithms for combinatorial problems. Although the first two examples
already had efficient enumeration algorithms, the last did not, which shows that interpolation methods
 can bring new results in enumeration complexity.

In all the following definitions, we assume that the predicate which defines
the enumeration problem is decidable in polynomial time, that is to say the corresponding decision problem 
is in $\classP$. 

\begin{definition}
A problem $\enum{A}$ is decidable in polynomial total time $\totalP$ (resp. probabilistic polynomial total time $\TotalPP$)
if there is a polynomial $Q(x,y)$ and a machine $M$ which solves $\enum{A}$ (resp. with probability greater than $\frac{2}{3}$)
 and satisfies for all $x$, $T(x,|M(x)|) < Q(|x|,|M(x)|)$.
\end{definition}

$\TotalPP$ is very similar to the class $\mathrm{\bf{BPP}}$ for decision problems.
By repeating a polynomial number of times an algorithm working in total polynomial time
and returning the set of solutions we find in the majority of runs, we decrease exponentially the probability of error.
The choice of $\frac{2}{3}$ is hence arbitrary, everything greater than $\frac{1}{2}$ would do.
This property holds for the other probabilistic classes we are going to introduce, but unlike $\TotalPP$
the predicate which defines the enumeration problem needs then to be decidable in polynomial time

Early termination versions of Zippel's algorithm \cite{zippel1990interpolating,kaltofen2000early} solve $enum{Poly}$
in a time polynomial in the number of monomials. If we now use this algorithm on the Determinant which is computable in polynomial time,
 we enumerate its monomials in probabilistic polynomial total time. Thanks to Example \ref{det}, the enumeration 
of the cycle covers of a graph is in $\TotalPP$.

\begin{definition} 
A problem $\enum{A}$ is decidable in incremental polynomial time $\incremP$ (resp. probabilistic polynomial total time $\incremPP$)
if there is a polynomial $Q(x,y)$ and a machine $M$ which solves $\enum{A}$ (resp. with probability greater than $\frac{2}{3}$) and satisfies for all $x$,
$T(x,i+1)-T(x,i) \leq Q(|x|, i)$.
\end{definition}

The classes $\incremP$ and $\incremPP$ can be related to the following search problem, parametrized
by a polynomially balanced predicate $A$.

\begin{trivlist}
  \item[]
    \textsc{AnotherSolution$_{A}$}\\
    \textit{Input:} An instance $x$ of $A$ and a subset $S$ of  $A(x)$\\
    \textit{Sortie:} An element of $A(x)\setminus S$ or a special value if $A(x) = S$
 \end{trivlist}

It has been proved \cite{DBLP:journals/ipl/KavvadiasSS00} that \textsc{AnotherSolution}$_{A} \in \mathrm{FP}$ if and only if $A \in \incremP$. 
We adapt this result to the class $\incremPP$.
If $A$ is a polynomial predicate, the search problem is to return for all $x$
an element of $A(x)$ or a special value if $A(x)$ is empty.
A search problem has a solution in probabilistic polynomial time if there is
a polynomial time algorithm which solves the search problem with probability $\frac{2}{3}$.

\begin{proposition}
\label{eq_inc}
\textsc{AnotherSolution}$_{A}$ has a solution in probabilistic polynomial time if and only if $A \in \incremPP$.
\end{proposition}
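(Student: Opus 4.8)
The plan is to mimic the classical equivalence between \textsc{AnotherSolution}$_A \in \mathrm{FP}$ and $A \in \incremP$ from \cite{DBLP:journals/ipl/KavvadiasSS00}, carefully tracking the probability of error so that the polynomially-many invocations needed to enumerate all of $A(x)$ still leave the total failure probability below $\frac13$. The two directions are handled separately.

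For the easy direction, suppose $A \in \incremPP$, witnessed by a machine $M$ with incremental delay bound $Q(|x|,i)$ and success probability $\ge \frac23$. Given an instance $x$ and a subset $S \subseteq A(x)$ of size $i$, I would run $M$ on $x$, intercepting its \textbf{Write} instructions, until it has produced $i+1$ distinct solutions or until time $(i+1)\cdot Q(|x|, i+1)$ has elapsed (whichever comes first); if it halts having output at most $i$ solutions, report the special value. In the run where $M$ succeeds, its output set equals $A(x)$ and, by the delay bound, it has produced its $(i+1)$st solution — or halted — within the allotted time, so among the (at most) $i+1$ solutions seen there is either an element of $A(x)\setminus S$ or a certificate that $A(x)=S$. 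Thus this solves \textsc{AnotherSolution}$_A$ in probabilistic polynomial time with the same $\frac23$ bound. (A subtlety: $M$'s solutions need not arrive in a convenient order, but since we read all of the first $i+1$ of them and $|S|=i$, at least one lies outside $S$ unless $A(x)\subseteq S$, i.e. $A(x)=S$.)

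For the converse, suppose \textsc{AnotherSolution}$_A$ has a probabilistic polynomial-time algorithm $N$ with success probability $\ge p_0$ for some constant $p_0 > \frac12$; first, standard amplification — run $N$ a constant number of times on the same input and, since a correct answer can be recognized using that $A \in \classP$ (here is where the hypothesis that the defining predicate is in $\classP$ is essential), keep a correct one — boosts $p_0$ to any constant we like. To enumerate $A(x)$, repeatedly call the amplified $N$ on $(x,S)$ with $S$ the set of solutions found so far, appending each new solution to $S$ and to the output stream, halting when $N$ returns the special value. The delay before the $(i+1)$st solution is the cost of one amplified call on $(x,S)$ with $|S|=i$, which is polynomial in $|x|$ and $i$ (the input $(x,S)$ has size polynomial in $|x|+i$, since solutions are polynomially balanced), so the delay is incremental. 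Since $|A(x)|$ is at most exponential in $|x|$ but each individual call must only succeed for the polynomially-many — wait: here is the real issue.

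The main obstacle is exactly this last point: the enumeration makes $|A(x)|+1$ calls to $N$, and $|A(x)|$ may be exponential, so a union bound over all calls does not keep the error below $\frac13$. The fix, as in the deterministic case, is that \textsc{AnotherSolution}$_A$ returns \emph{some} new solution, so after $i$ successful calls we have $i$ solutions and the algorithm terminates once $i = |A(x)|$; the number of calls is $|A(x)|+1$, still potentially exponential. To handle this I would amplify $N$'s success probability on each call of the enumeration not to a constant but to $1 - \frac{1}{3}\cdot 2^{-(|x|^c + 1)}$ where $2^{|x|^c}$ bounds $|A(x)|$ (such a polynomial $c$ exists by polynomial balancedness); this requires only $O(|x|^c)$ repetitions of $N$ per call, so each call — and hence the delay between consecutive outputs — remains polynomial in $|x|$ and $i$. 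A union bound over the at most $2^{|x|^c}+1$ calls then bounds the total failure probability by $\frac13$, so the overall enumeration succeeds with probability $\ge \frac23$ and has incremental delay, i.e. $A \in \incremPP$. The delicate step is verifying that this per-call amplification keeps the delay polynomial while the number of calls — though exponential in the worst case — is absorbed by the exponentially small per-call error; I would spell out the polynomial bookkeeping (size of $(x,S)$, cost of checking membership via the $\classP$ decision procedure, and the $O(|x|^c)$ amplification factor) in the appendix.
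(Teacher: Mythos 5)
Your proof is correct and follows essentially the same route as the paper's: for the easy direction you run the $\incremPP$ enumerator to produce $|S|+1$ solutions, and for the converse you amplify the per-call error of the \textsc{AnotherSolution} algorithm to roughly $2^{-\mathrm{poly}(|x|)}$ (the paper phrases this as error $\frac{1}{3B}$ with $B$ the bound on $|A(x)|$, obtained by $O(\log B)$ repetitions) so that a union bound over the possibly exponentially many calls still yields overall success $\geq \frac{2}{3}$. If anything you are slightly more careful than the paper in noting that the amplification step relies on $A \in \classP$ to recognize and keep a correct non-special answer; the paper asserts the amplification without spelling this out.
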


Since Zippel's algorithm finds all monomials in its last step, it seems hard to turn it into an incremental algorithm.
On the other hand Algorithm \ref{incremental} whose design has been inspired by Proposition \ref{eq_inc} does the interpolation
with incremental delay. 

\begin{example}
To each graph we associate a polynomial PerfMatch, whose monomials represent the perfect matchings of this graph.
For graphs with a ``Pfaffian'' orientation, such as the planar graphs, this polynomial is related to a Pfaffian and is then efficiently computable. 
Moreover all the coefficients of this graph are positive, therefore we can use Algorithm  \ref{incremental} to interpolate it deterministically with incremental delay. 
We have then proved that the enumeration of perfect matching is in $\incremP$.
\end{example}

\begin{definition}
A problem $\enum{A}$ is decidable in polynomial delay $\delayP$, (resp. probabilistic polynomial delay time $\DelayPP$)
if there is a polynomial $Q(x,y)$ and a machine $M$ which solves $\enum{A}$ (resp. with probability greater than $\frac{2}{3}$) and satisfies for all $x$,
 $T(x,i+1)-T(x,i) \leq Q(|x|)$.  
\end{definition}

\begin{example}[Spanning Hypertrees]
 The notion of a spanning tree of a graph has several interesting generalizations to the case of hypergraphs.
Nevertheless deciding if there is a spanning hypertree is polynomially computable only for the 
notion of Berge acyclicity and $3$-uniform hypergraphs \cite{phd_duris} thanks to an adaptation of the Lov\'asz matching 
algorithm in linear polymatroids \cite{lovasz1980matroid}. 

A polynomial $Z$ is defined for each $3$-uniform hypergraph \cite{masbaum2002new} with coefficients $-1$ or $1$, whose monomials are in bijection 
with the spanning hypertrees of the hypergraph. A new Matrix-Tree theorem \cite{masbaum2002new} shows that $Z$
is the Pfaffian of a matrix, whose coefficients are linear polynomials depending on the hypergraph. 
 Thus $Z$ is efficiently computable by first evaluating a few linear polynomials and then a Pfaffian.
This has been used to give a simple $\mathrm{RP}$ algorithm \cite{caracciolo2008randomized} to decide the existence of a spanning hypertree in 
a $3$-uniform hypergraph.

If we use Algorithm \ref{pdelai} we can enumerate the monomials of $Z$ with probabilistic polynomial delay. The delay is good since the total degree of the monomials
is low and the size of the coefficients is $1$, which helps in the interpolation of the univariate polynomials.
As a conclusion, the problem of enumerating the spanning hypertrees of a $3$-uniform hypergraph is in $\DelayPP$.
\end{example}

\section{Degree $2$ Polynomials}

\subsection{An Incremental Algorithm for Degree 2 Polynomials}

We now give an incremental algorithm for the case of polynomials of degree $d=2$.
It is enough to describe a procedure which finds a monomial of a polynomial $P$,
then Algorithm \ref{incremental} turns it into an incremental algorithm.

First remark that we may use Algorithm \ref{find_monomial} on a polynomial $P$
of arbitrary degree to find a minimal support $L$ in $P$. 
 Since it is minimal, all monomials of $P_{L}$ have $L$ as support and $P_{L}(\vec{X}) = \displaystyle{ \vec{X}^{L}Q(\vec{X})}$ with
 $Q$ a multilinear polynomial. Therefore if we find a monomial of $Q(\vec{X})$
and multiply it by $\vec{X}^{L}$, we have a monomial of $P$.

 We may simulate an oracle call to $Q(\vec{X})$ by a call to the oracle giving $P_{L}$  and a division by 
 the value of $ \vec{X}^{L}$ as long as no $X_{i}$ is chosen to be $0$. 
Remark that the procedure \emph{not\_zero\_improved}($L', L\setminus L',Q,\epsilon$) calls the black box only on 
strictly positive values since  $L = L' \cup (L\setminus L')$. 
It allows us to decide if $Q$ has a monomial whose support contains $L'$.
In Algorithm \ref{deg2} we find a $L'$ such that it is contained in the support of a monomial and is maximal for this property.
Since $Q(\vec{X})$ is multilinear there is only one monomial of support $L'$ and 
we find its coefficient by the procedure \emph{coefficient($Q$,$L'$)}.

\begin{algorithm}[H]
\caption{Finding a monomial of a degree two polynomial}
\DontPrintSemicolon
 \KwData{A polynomial $P_{L} = \displaystyle{ \vec{X}^{L}Q(\vec{X})}$ of degree $2$ with $n$ variables,
  an error bound $\epsilon$}
\KwResult{A monomial of $Q$}
\Begin{
$L' \longleftarrow \varnothing$\;
\For{$i=1$ \KwTo $n$}{
\If{ \emph{not\_zero\_improved}($\varnothing,L' \cup \{i\},Q,\frac{\epsilon}{n}$)}{$L'  \longleftarrow L' \cup \{i\}$}
}
\Return{\emph{coefficient}($Q,L'$)}
}
\label{deg2}
\end{algorithm}

Thanks to Algorithm \ref{deg2} we have a monomial of $Q$ and if we multiply it by  $ \vec{X}^{L}$ it is a monomial of $P$.
We then use it to implement \emph{find\_monomial} in Algorithm \ref{PincremP} and obtain an incremental interpolation 
algorithm for degree $2$ polynomials. 

\subsection{Limit to the Polynomial Delay Approach}

Here we study the problem of deciding if a monomial has coefficient zero in a polynomial.
In the case of a multilinear polynomial the procedure \emph{not\_zero\_improved} solves the problem
in polynomial time but for degree $2$ polynomials we prove it is unpossible unless $\mathrm{RP} = \NP$. 
Therefore there is no generalization of Algorithm \ref{pdelai} to higher degree polynomials, although
a polynomial delay algorithm may exist.

\begin{proposition}
 Assume there is an algorithm which, given a polynomial of degree $2$ and a monomial, can decide
in probabilistic polynomial time if the monomial appears in the polynomial then $\mathrm{RP} = \NP$. 
\end{proposition}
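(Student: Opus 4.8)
The plan is to reduce an $\NP$-complete problem to the task of detecting a specified monomial in a degree $2$ polynomial that is evaluable in polynomial time, exactly as in the (commented-out) degree $3$ argument, but using a graph-theoretic construction coming from the Matrix-Tree theorem rather than a product of local factors. Concretely, I would encode a restricted version of the Hamiltonian path (or Hamiltonian cycle) problem, one that remains $\NP$-complete even when each vertex has bounded degree so that the natural polynomial associated to the graph has degree $2$ in each variable. The key is to find a polynomial $P_G$, attached to a graph $G$, such that (i) $P_G$ is computable in polynomial time via a determinant or Pfaffian-type formula (the Matrix-Tree theorem, see \cite{aigner2007course}), (ii) each variable occurs with degree at most $2$ in $P_G$, and (iii) a distinguished monomial $m_0$ appears in $P_G$ with nonzero coefficient if and only if $G$ has a Hamiltonian path.

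The main steps, in order, would be: first, fix the variant of Hamiltonicity and recall why it is $\NP$-complete under the degree restriction (this is standard, via bounded-degree reductions). Second, define the weighted graph: put a variable $X_e$ on each edge $e$, form the weighted Laplacian-type or Tutte-type matrix whose entries are linear in the $X_e$, and let $P_G$ be the corresponding determinant/Pfaffian given by the Matrix-Tree theorem; argue that it is evaluable in polynomial time by plugging in numerical values and computing a determinant. Third, identify the monomials of $P_G$ with the relevant combinatorial objects (spanning trees, or cycle covers, or $1$-factors, depending on which version of the theorem is used) and check that each $X_e$ appears to degree at most $2$ because of the bounded-degree hypothesis on $G$ — this is where the ``degree $2$'' in the statement comes from. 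Fourth, pick the monomial $m_0$ that corresponds to the edge set of a Hamiltonian path (say $\prod X_e$ over the path edges, possibly times a fixed completion); show its coefficient is nonzero exactly when the path exists, using a sign/cancellation analysis or a positivity trick so that distinct Hamiltonian structures cannot cancel. Finally, feed $P_G$ and $m_0$ to the hypothetical probabilistic monomial-detection algorithm: it answers correctly with bounded error, its error is one-sided after the usual amplification, and ``yes'' instances of Hamiltonicity are accepted with high probability while ``no'' instances are always rejected — giving an $\RP$ algorithm for an $\NP$-complete problem, hence $\RP = \NP$.

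The delicate part, and the step I expect to be the main obstacle, is arranging non-cancellation: in a determinant or Pfaffian over $\mathbb{Q}$ the signed contributions of different combinatorial structures to the same monomial can cancel, so I must either choose the matrix so that the target monomial $m_0$ receives a contribution from a unique structure (e.g. the Hamiltonian path, if present, is the only configuration with that edge multiset), or work with a monotone/positively-weighted variant where no cancellation is possible, or compute the coefficient modulo a suitable prime to kill spurious cancellations while preserving the Hamiltonicity witness. Getting the Matrix-Tree-type identity to land precisely on degree $2$ per variable while keeping this injectivity of the monomial-to-structure map is the crux; everything else is bookkeeping about polynomial-time evaluability and the standard $\RP$ amplification already recalled in the text.
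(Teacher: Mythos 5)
Your high-level plan (reduce bounded-degree Hamiltonicity to monomial detection via a Matrix-Tree polynomial evaluable through a determinant) matches the paper's, but there is a concrete gap in the step that is supposed to produce a \emph{degree $2$} polynomial, and it is exactly the step where the paper's technical trick lives. If you put one variable $X_e$ per edge, every monomial of the Matrix-Tree polynomial is a product of distinct edge variables for a spanning tree, so each $X_e$ occurs to degree at most $1$ regardless of the vertex degrees of $G$ --- the polynomial is multilinear, not degree $2$, and bounding the degree of $G$ does nothing to change that. Your claim that ``each $X_e$ appears to degree at most $2$ because of the bounded-degree hypothesis on $G$'' is therefore incorrect as stated, and with a multilinear polynomial the procedure \emph{not\_zero\_improved} of the paper would already decide the monomial question, collapsing nothing. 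The paper's key move, which you are missing, is to first write the directed Laplacian with edge variables $X_{i,j}$ and then substitute $X_{i,j} \mapsto Y_i Z_j$; under this substitution the degree of $Y_i$ (resp.\ $Z_j$) in a spanning-arborescence monomial becomes the out-degree (resp.\ in-degree) of vertex $i$ (resp.\ $j$) in that arborescence, which is where the bounded-degree hypothesis on $G$ actually enters and yields a genuine degree $2$ polynomial in $2n$ variables. With this substitution the target monomial $Y_s Z_v \prod_{i\neq s,v} Y_i Z_i$ corresponds precisely to spanning arborescences that are Hamiltonian $s$--$v$ paths.

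On the cancellation worry, which you flag as ``the crux'': in this particular construction it evaporates, and you should say why rather than leave it as an open obstacle. The directed Matrix-Tree theorem for the minor $L(G)_{s,t}$ expresses $(-1)^{s+t}\det(L(G)_{s,t})$ as a sum over spanning arborescences rooted at $s$, each contributing a monomial with coefficient $+1$. After the substitution $X_{i,j}\mapsto Y_i Z_j$, several arborescences may collapse onto the same monomial in the $Y,Z$ variables, but they all come with coefficient $+1$, so the coefficient of the target monomial is a nonnegative count of Hamiltonian $s$--$v$ paths; no sign analysis, positivity trick, or modular computation is needed. The rest of your outline (polynomial-time evaluability of the determinant, iterating over the polynomially many pairs $(s,v)$, the $\mathrm{RP}$ amplification) is fine, but as written the proposal does not actually produce a degree $2$ instance nor close the cancellation question, so it does not yet constitute a proof.
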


\begin{proof}
Let $G$ be a directed graphs on $n$ vertices, the Laplace matrix $L(G)$ is defined by 
 $L(G)_{i,j} = - X_{i,j}$ when $(i,j) \in E(G)$, $L(G)_{i,i} = \displaystyle{\sum_{(i,j) \in E(G)}X_{i,j}}$  and $0$ otherwise.
The Matrix-Tree theorem is the following equality where $\mathcal{T}_s$ is the set of spanning trees of $G$
whose all edges are oriented away from the vertex $s$ and $L(G)_{s,t}$ is the minor of $L(G)$ where row $s$ and column $t$
have been deleted:
$$ \det(L(G)_{s,t})(-1)^{s+t} = \displaystyle{\sum_{T \in \mathcal{T}_s}\prod_{(i,j) \in T} X_{i,j}}$$
We substitute to $X_{i,j}$ the product of variables $Y_i Z_j$ in the polynomial $\det(L(G)_{s,t})$
which makes it a polynomial in $2n$ variables still computable in polynomial time.
Every monomial represents a spanning tree whose maximum outdegree is the degree of the polynomial.
We assume that every vertex of $G$ has indegree and outdegree less or equal to $2$
therefore $\det(L(G)_{s,t})$ is of degree $2$.

Remark now that a spanning tree, all of whose vertices have outdegree and indegree less or equal to $1$
is an Hamiltonian path. Therefore $G$ has an Hamiltonian path beginning by $s$ and finishing by a vertex $v$
if and only if $\det(L(G)_{s,t})$ contains the monomial $Y_s Z_v \prod_{i \neq s,v} Y_i Z_j$.

There is only  a polynomial number of pairs $(s,v)$, thus if we assume there is a probabilistic polynomial time algorithm to test if a monomial is 
in a degree $2$ polynomial, we can decide in probabilistic polynomial time if $G$ of outdegree and indegree at most $2$ has an Hamiltonian path.
Since this problem is $\NP$ complete \cite{Plesnik79} we have $\mathrm{RP} = \NP$.
\end{proof}

\section{Conclusion}

Let us compare our method to three classical interpolation algorithms,
which unlike our method can interpolate polynomials of any degree.
Once restricted to multilinear polynomials, Algorithm \ref{pdelai} is really efficient
compared to the algorithm of Klivans and Spielman (KS), which is the only known method with a bound on the delay.
Note also that Algorithm \ref{incremental}, which is not presented in the next table, needs  $n^{2}$ calls 
to the black box to guess one monomial, whereas KS needs $(nD)^{6}$ calls 
and then the same method is used to recover the whole polynomial from this procedure.

In the table $T$ is a bound on $t$ the number of monomials that Ben-Or Tiwari and Zippel algorithms need to do the interpolation.
In the row labeled Enumeration is written the kind of enumeration algorithm the interpolation method 
gives when the polynomial is polynomially computable.

\begin{center}
\begin{tabular}[c]{|l||l|l|l|l|}
 \hline
  & Ben-Or Tiwari \cite{ben1988deterministic} &  Zippel \cite{zippel1990interpolating} & KS \cite{klivans2001randomness} & Algorithm \ref{pdelai}\\
\hline
Algorithm type& Deterministic & Probabilistic & Probabilistic & Probabilistic  \\
\hline
Number of calls & $2T$ &  $tnD$ & $t(nD)^{6}$ & $tnD(n + \log(\frac{1}{\epsilon}))$   \\
\hline
Total time & Quadratic in $T$ &  Quadratic in $t$\, &  Quadratic in $t$\, & Linear in $t$   \\
\hline
Enumeration & Exponential & $\TotalPP$ & $\incremPP$ & $\DelayPP$\\ 
\hline
Size of points & $T\log(n)$ & $\log(\frac{nT^{2}}{\epsilon})$& $\log(\frac{nD}{\epsilon})$ & $\log(D)$\\
\hline
\end{tabular}
\end{center}

\medskip

\emph{Acknowledgements}   
Thanks to Herv\'e Fournier ``l'astucieux'', Guillaume Malod, Sylvain Perifel and Arnaud Durand for their helpful comments
about this article.

\bibliographystyle{splncs}
\bibliography{ma_biblio.bib}

\section*{Appendix}

Here we give most of the proofs which are omitted in the article and the alternate 
implementation of \emph{not\_zero\_improved} with only one oracle call.

\bigskip

\noindent
\textbf{Proof of Theorem \ref{PincremP} :}

\noindent
\textbf{Correction :} \\
We analyze this algorithm under the assumption that the procedures \emph{not\_zero} and \emph{find\_monomial}
do not make mistakes.

We have the following invariant of the while loop :
$Q$ is made from a subset of the monomials of $P$.
It is true at the beginning because $Q$ is zero.
Assume that $Q$ satisfies this property at a certain point of the while loop,
since we know that \emph{not\_zero}(\emph{subtract}($P$,$Q$)),
 $P-Q$ is non zero and is then a non empty subset of the monomials
 of $P$. The outcome of \emph{find\_monomial}(\emph{subtract}($P$,$Q$)) 
is thus a monomial of $P$ which is not in $Q$, therefore $Q$ plus this monomial 
still satisfies the invariant.
Remark that we have also proved that the number of monomials of $Q$ is increasing by 
one at each step of the while loop.
The algorithm must then terminate after $t$ steps and when it does \emph{not\_zero}(\emph{subtract}($P$,$Q$))
gives a negative answer meaning that $Q = P$.

\noindent
\textbf{Probability of error :} \\
The probability of failure is bounded by the sum of the probabilities of error coming from \emph{not\_zero} and \emph{find\_monomial}.
We both call these procedures $t$ times with an error bounded by $\frac{\epsilon}{2^{n+1}}$. Since
$2t \leq 2^{n+1}$, the total probability  of error is bounded by $\epsilon$.

\noindent
 \textbf{Complexity :} \\
The procedure \emph{not\_zero} is called $t$ times and uses the oracle $n + \log(\frac{1}{\epsilon})$ times, whereas
\emph{find\_monomial} is called $t$ times but uses $n(n +\log(\frac{1}{\epsilon}))$ oracle calls, which adds up to $t(n+1)(n +\log(\frac{1}{\epsilon}))$ calls to the oracle.
In both cases the evaluation points are of size $O(D)$.

The delay between two solutions is bounded by the evaluation of  \emph{find\_monomial}, which 
is dominated by the execution of \emph{subtract}$(P,Q)$ at each oracle call on points of size
$D$. The algorithm calls \emph{subtract}$(P,Q)$ $n(n +\log(\frac{1}{\epsilon}))$ times and each of these calls needs $O( iD\max(C,D))$, which gives a delay of $O( iD\max(C,D)n(n +\log(\frac{1}{\epsilon})))$. 
\qed

\bigskip

\noindent
\textbf{Alternate method to implement \emph{not\_zero\_improved} :}

We want to decide if $P_{1}(\vec{X})$ is the zero polynomial in  $P_{L_{1}\cup L_{2}} = \displaystyle{ \vec{X}^{L_{2}} P_{1}(\vec{X}) + P_{2}(\vec{X})}$.
We let $\alpha$ be the integer $2^{2 (n + C + D \log(\frac{2D}{\epsilon}))}$ and $l$ the cardinal of $L_{2}$.
We do a call to the oracle on the values $(x_{i})_{i \in [n]}$:

$$ \left\{ \begin{array}{l l}
     x_{i} \text{ is randomly chosen in }  [\frac{2D}{\epsilon}] &\text{ if } i \in L_{1} \\
      x_{i} = \alpha & \text{ if } i \in L_{2}\\
      x_{i} = 0   & \text{otherwise}
    \end{array} \right.
 $$
 
The value of a variable which is not in $L_{2}$ is bounded by $\frac{2D}{\epsilon}$, therefore a monomial 
of $P_{2}$ (which contains at most $l-1$ variables of $L_{2}$) has its contribution to $P(x_{1},\dots,x_{n})$ bounded by $2^{C}(\frac{2D}{\epsilon})^{D}\alpha^{l-1} $.
Hence the total contribution of $P_{2}$ is bounded in absolute value by $2^{n + C + D \log(\frac{2D}{\epsilon})} \alpha^{l-1}$ which is equal to
$\alpha^{l - \frac{1}{2}}$. If $P_{1}(x_{1},\dots,x_{n})$ is zero, this also bounds the absolute value of $P(x_{1},\dots,x_{n})$.

Assume now that $P_{1}(x_{1},\dots,x_{n})$ is not zero, since $\vec{x}^{L_{2}} $ is equal to $\alpha^{l}$, the absolute value 
of $\displaystyle{ \vec{x}^{L_{2}} P_{1}(x_{1},\dots,x_{n})}$ has $\alpha^{l}$ for lower bound.
By the triangle inequality 
$$ \begin{array}{l}
|P(x_{1},\dots,x_{n})| > \left| |\displaystyle{\vec{x}^{L_{2}} P_{1}(x_{1},\dots,x_{n})}|-|P_{2}(x_{1},\dots,x_{n})| \right| \\
|P(x_{1},\dots,x_{n})| > \alpha^{l} - \alpha^{l-\frac{1}{2}} > \alpha^{l-\frac{1}{2}}
\end{array} $$

We can then decide if $P_{1}(x_{1},\dots,x_{n})$ is zero by comparison of $P(x_{1},\dots,x_{n})$ to $\alpha^{l-\frac{1}{2}}$.
Remark that $P_{1}(x_{1},\dots,x_{n})$ may be zero even if $P_{1}$ is not zero. Nonetheless $P_{1}$ only depends on variables 
which are in $L_{1}$ and are thus randomly taken in $[\frac{2D}{\epsilon}]$. By Lemma \ref{proba}, the probability that the polynomial $P_{1}$ 
is not zero although $P_{1}(x_{1},\dots,x_{n})$ has value zero is bounded by $\epsilon$. We have then designed an algorithm which decides with
probability $1 - \epsilon$ if $P$ has a monomial whose support contains $L_{2}$ and is contained in $L_{1} \cup L_{2}$.

Remark that this implementation of \emph{not\_zero\_improved} needs only one oracle call but requires big evaluation points and
to know $C$ in advance. To implement \emph{coefficient} just do the same oracle call and an integer division of the value by $\alpha^{l}$
to get the coefficient.

\bigskip

\noindent 
\textbf{Proof of Theorem \ref{PdelayP} :}

The procedure \emph{not\_zero\_improved} does one interpolation on a degree $l$ polynomial where $l$ is bounded
by $D$. We can bound the value of the polynomial on points of value less than $D$ by $2^{n}2^{C}D^{D}$,
where $C$ is a bound on the size of the coefficients of the polynomial. Since we have assumed that $C = O(n)$
and that $D < n$ because the polynomial is multilinear, the logarithm of the values taken by the polynomial for the interpolation
is bounded by $n\log(n)$. The univariate interpolation then needs a time $O(D^{2}n\log(n))$ and 
$D$ oracle calls on points of size $\log(D)$. 

The procedure \emph{not\_zero\_improved} is called in Algorithm \ref{pdelai} with an error parameter $\frac{\epsilon}{n2^{n}}$,
it therefore repeats  the previously described interpolation $O(n + \log(\frac{1}{\epsilon}))$ times.
Each call to \emph{not\_zero\_improved} needs a time $O(D^{2}n\log(n))(n + \log(\frac{1}{\epsilon}))$ and 
$D(n + \log(\frac{1}{\epsilon}))$ oracle calls.

Between the visit of two leaves, we call the procedure \emph{not\_zero\_improved} at most $n$ times and once the procedure \emph{coefficient}
which has a similar complexity. Hence the delay is bounded in time by $O(D^{2}n^{2}\log(n))(n + \log(\frac{1}{\epsilon}))$ and by 
$nD(n + \log(\frac{1}{\epsilon}))$ oracle calls on points of size $\log(2D)$.

Finally since we call the procedures \emph{not\_zero\_improved} and \emph{coefficient} less than $nt$ times during the algorithm,
the error is bounded by $nt\frac{\epsilon}{n2^{n}} < \epsilon$.
\qed

\bigskip

\noindent 
We describe here the way to a improve the error bound for  $\incremPP$ algorithms,
but it would work equally well on $\DelayPP$ ones. Note that in both cases we need an exponential 
space and there is a slight overhead.
\begin{proposition}
\label{amplification}
 If a problem $A$ is in $\incremPP$ then there is a polynomial $Q$ and a machine $M$ which for all $\epsilon$ computes the solution of $A$ with probability $1-\epsilon$ and satisfies for all $x$,
$T(x,i+1)-T(x,i) \leq Q(|x|, i )\log(\frac{1}{\epsilon})$.  
\end{proposition}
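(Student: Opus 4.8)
The plan is to take the machine $M_0$ witnessing $A \in \incremPP$, which solves $\enum{A}$ with probability at least $\frac{2}{3}$ and has incremental delay $Q_0(|x|,i)$, and run $k = O(\log(\frac{1}{\epsilon}))$ independent copies of it in parallel, interleaving their steps so that the $j$-th step of every copy is simulated within a polynomial factor of the original delay. Because the predicate $A$ is decidable in polynomial time (this is assumed for all of the classes except $\totalP$, and in particular for $\incremPP$), the new machine $M$ can afford to check each candidate solution produced by any copy before deciding whether to output it.

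The key steps, in order. First, run the $k$ copies in lock-step and maintain for each copy a counter of how many (valid, distinct) solutions it has emitted so far. Whenever some copy emits a word $y$, verify in polynomial time that $A(x,y)$ holds and that $y$ has not already been written by $M$; if both checks pass, $M$ immediately writes $y$. Second, to decide when to stop: $M$ stops as soon as a strict majority of the copies have halted and, among those halted copies, a majority agree on having produced exactly the current output set $S$ of $M$ (equivalently, have halted with output set equal to $S$). A standard Chernoff / majority-vote argument shows that the probability that this majority set differs from $A(x)$ is at most $\epsilon$, provided $k = c\log(\frac{1}{\epsilon})$ for a suitable constant $c$: each copy is correct with probability $\ge \frac{2}{3}$, so the probability that fewer than half are correct decays exponentially in $k$. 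Third, bound the delay. Between the $i$-th and $(i+1)$-th outputs of $M$, each copy advances by at most the number of its own steps needed to produce its next new solution beyond the ones already seen; since each copy has incremental delay $Q_0(|x|, i')$ for some $i' \le |A(x)|$ which is itself polynomially bounded, and since $M$ runs $k$ copies plus does $\mathrm{poly}(|x|)$ verification work per emitted word, the delay of $M$ between consecutive outputs is bounded by a polynomial $Q(|x|,i)$ times $k = O(\log(\frac{1}{\epsilon}))$, which is exactly the claimed form $Q(|x|,i)\log(\frac{1}{\epsilon})$.

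The main obstacle is the stopping rule and its interaction with the delay bound: a correct copy may finish while some incorrect copies are still running, or an incorrect copy may halt early with a proper subset of $A(x)$, so $M$ must not halt prematurely on a bad majority, yet it also must not block forever waiting for slow copies. The fix is that $M$ does not need all copies to halt — it halts once a majority have halted and agree — and the correctness probability is controlled purely by the majority vote over the $k$ independent runs; the delay between outputs is unaffected by copies that have already halted. The only genuine overhead, as noted in the statement, is the exponential space needed to remember the full output set $S$ (to perform the duplicate check and to compare halted copies' outputs), and a polynomial-in-$|x|$ slowdown per step from the interleaving and verification, both of which are absorbed into $Q$.
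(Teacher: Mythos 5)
Your proposal shares the paper's high-level structure --- run $k$ parallel copies of the $\incremPP$ machine, verify each candidate against $A(x,y)$, and merge the outputs into a set $S$ --- but your stopping rule and probability argument are genuinely different from the paper's, and the difference introduces a gap in the delay analysis.

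The paper's observation is slicker: since $M$'s output is the union of the verified (hence valid) solutions emitted by every copy, $M$ enumerates exactly $A(x)$ as soon as \emph{any single copy} does. A one-line bound on the complement gives correctness probability at least $1-(1/3)^k$, so $k = \log(\frac{1}{\epsilon})/\log 3$ suffices; no majority is needed. The stopping rule is a timeout: after $i$ merged outputs, simulate each copy for $Q_0(|x|,i)$ more steps and halt if no fresh solution appears. This never waits for any copy to halt.

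Your majority-vote rule (``halt once a strict majority of copies have halted and a majority of those agree with $S$'') overcomplicates the amplification (a Chernoff bound where a union bound suffices, with a worse constant), and more importantly leaves the delay bound unproved at the critical point. An incorrect copy is under no obligation to halt at all, and even a correct but slow copy can take a long time to catch up, so $M$ can be stuck after its last output waiting for the majority condition with nothing in your argument controlling that final interval. Your delay paragraph bounds a copy's time to its next emission by its own incremental delay, but does not address that a copy's internal count can lag behind $M$'s merged count $i$ (so it may emit several stale solutions already in $S$ before producing anything new), nor the time for a majority of copies to \emph{halt}. The paper's timeout rule sidesteps both issues by construction, which is why the paper does not need the majority machinery at all.
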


\begin{proof}
 Since $A$ is in $\incremPP$, there is a machine $M$ which computes the solution of $A$ with probability $\frac{2}{3}$ and a delay bounded by $Q(|x|, i )$.
 Since $A(x,y)$ may be tested in polynomial time, we can assume that every output of $M$ is a correct solution, by checking $A(x,y)$ before outputting $y$
 and stopping if not $A(x,y)$.
We now simulate $k$ runs in parallel of the machine $M$ on input $x$.
Each time we should output a solution, we add it to a set of solution (with no repetition).
Assume we have already outputted $i$ solutions, we let the $k$ runs be simulated for another $Q(|x|, i )$ steps each before outputting
a new solution of the set of found solutions and stop if it is empty.
This algorithm clearly works in incremental polynomial time and if one of the run finds all solutions, it also finds all solutions.
Then the probability of finding all solutions is more than $1 - \frac{1}{3}^{k}$. If we set $k = \frac{\log(\frac{1}{\epsilon})}{\log(3)}$,
we have a probability of $1-\epsilon$, which achieves the proof.
\end{proof}

\bigskip

\noindent
\textbf{Proof of Proposition \ref{eq_inc} :}

 Assume  \textsc{AnotherSolution}$_{A}$ is computable in probabilistic polynomial time, we want to enumerate the solution of the enumeration problem
 $A$ on the input $x$. We know a bound on the number of solutions of $A$, that we call $B$. We assume that the algorithm which decides  \textsc{AnotherSolution}$_{A}$ 
 has a probability of error of $\frac{1}{3B}$. That is achievable by repeating at most $\log(B)$ times the original algorithm, therefore the running time is still polynomial. 
 We apply  this algorithm to $x$ and the empty set, we add the found solution to the set of solutions
 and we go on like this until we have found all solutions. The delay between the $i^{\text{th}}$ and the $i+1^{\text{th}}$ solution is bounded by the execution 
 of the algorithm \textsc{AnotherSolution} which is polynomial in $|x|$ and $i$ the size of the set of already found solutions. Moreover the probability of error is bounded
 by $\frac{1}{3} = B \times \frac{1}{3B}$. This proves that $A$ is in $\incremPP$. 
 
 Conversely if $A \in \incremPP$, on an instance  ($x$,$S$) of \textsc{AnotherSolution}$_{A}$ we want to find a solution which is not in $S$.
We enumerate $|S|+1$ solutions by the $\incremPP$ algorithm, in time polynomial in $|S|$ and $|x|$. If one of these solutions is not in $S$,
it is the output of the algorithm. If $S$ is the set of all solutions, the enumeration will end in time polynomial in $S$ and $x$, which allow us
to output the value meaning there is no other solutions.
\qed

\end{document}